
\documentclass[12pt,transaction, draftclsnofoot,onecolumn]{IEEEtran}
\usepackage{amsfonts}
\usepackage{amssymb}
\usepackage[draft]{hyperref}
\usepackage{graphicx}
\usepackage{subfigure}
\usepackage{enumerate}
\usepackage{amsmath}
\usepackage{color}
\usepackage{amsthm}
\usepackage{amsmath}
\usepackage{algorithm}
\usepackage{stfloats}
\usepackage{algpseudocode}

\hyphenation{since}


\newcommand{\bp}{\begin{proof} \small }
\newcommand{\ep}{\end{proof} \normalsize}
\newcommand{\epx}{\end{proof} \small}
\newcommand{\bpa}{\begin{proofappx} \footnotesize }
\newcommand{\epa}{\end{proofappx} \small }
\newtheorem{theorem}{Theorem}

\newtheorem{lemma}{Lemma}

\newtheorem*{theorem*}{Theorem}
\newtheorem*{proposition*}{Proposition}
\newtheorem*{corollary*}{Corollary}
\newtheorem*{lemma*}{Lemma}
\newtheorem*{assumption*}{Assumption}
\newtheorem*{definition*}{Definition}
\newtheorem*{claim*}{Claim}

\newcommand{\bm}[1]{\mbox{\boldmath $#1$}}

\newcommand{\be}{\begin{equation}}
\newcommand{\ee}{\end{equation}}
\newcommand{\bs}{\begin{subequations}}
\newcommand{\es}{\end{subequations}}
\newcommand{\bq}{\begin{eqnarray}}
\newcommand{\eq}{\end{eqnarray}}
\newcommand{\bqn}{\begin{eqnarray*}}
\newcommand{\eqn}{\end{eqnarray*}}

\newcommand{\ba}{\left[ \begin{array}}
\newcommand{\ea}{\\ \end{array} \right]}
\newcommand{\ben}{\begin{enumerate}}
\newcommand{\een}{\end{enumerate}}

\def\a{{\boldsymbol{a}}}
\def\b{{\boldsymbol{b}}}

\def\d{{\boldsymbol{d}}}

\def\real{{\mathchoice%
{\hbox{\rm\setbox1=\hbox{I}\copy1\kern-.45\wd1 R}}
{\hbox{\rm\setbox1=\hbox{I}\copy1\kern-.45\wd1 R}}
{\hbox{\scriptsize\rm\setbox1=\hbox{I}\copy1\kern-.45\wd1 R}}
{\hbox{\scriptsize\rm\setbox1=\hbox{I}\copy1\kern-.45\wd1 R}}}}

\def\Zint{{\mathchoice{\setbox1=\hbox{\sf Z}\copy1\kern-.75\wd1\box1}
{\setbox1=\hbox{\sf Z}\copy1\kern-.75\wd1\box1}
{\setbox1=\hbox{\scriptsize\sf Z}\copy1\kern-.75\wd1\box1}
{\setbox1=\hbox{\scriptsize\sf Z}\copy1\kern-.75\wd1\box1}}}
\newcommand{\complex}{ \hbox{\rm C\kern-0.45em\rule[.07em]{.02em}{.58em}%
\kern 0.43em}}

\begin{document}
%
\title{Joint Service Caching and Task Offloading for Mobile Edge Computing in Dense Networks}

\author{\IEEEauthorblockN{Jie Xu$^*$, Lixing Chen$^*$, Pan Zhou$^\dagger$\\}
\IEEEauthorblockA{$^*$Department of Electrical and Computer Engineering, University of Miami, USA\\
$^\dagger$School of EIC, Huazhong University of Science and Technology, China}
}

\maketitle

\begin{abstract}
Mobile Edge Computing (MEC) pushes computing functionalities away from the centralized cloud to the network edge, thereby meeting the latency requirements of many emerging mobile applications and saving backhaul network bandwidth. Although many existing works have studied computation offloading policies, service caching is an equally, if not more important, design topic of MEC, yet receives much less attention. Service caching refers to caching application services and their related databases/libraries in the edge server (e.g. MEC-enabled BS), thereby enabling corresponding computation tasks to be executed. Because only a small number of application services can be cached in resource-limited edge server at the same time, which services to cache has to be judiciously decided to maximize the edge computing performance. In this paper, we investigate the extremely compelling but much less studied problem of dynamic service caching in MEC-enabled dense cellular networks. We propose an efficient online algorithm, called OREO, which jointly optimizes dynamic service caching and task offloading to address a number of key challenges in MEC systems, including service heterogeneity, unknown system dynamics, spatial demand coupling and decentralized coordination. Our algorithm is developed based on Lyapunov optimization and Gibbs sampling, works online without requiring future information, and achieves provable close-to-optimal performance. Simulation results show that our algorithm can effectively reduce computation latency for end users while keeping energy consumption low.
\end{abstract}


%
\IEEEpeerreviewmaketitle

\section{Introduction}
Pervasive mobile computing and the Internet of Things are driving the development of many new compute-demanding and latency-sensitive applications, such as cognitive assistance, mobile gaming and virtual/augmented reality (VR/AR). Due to the often unpredictable network latency and expensive bandwidth, cloud computing becomes unable to meet the stringent requirements of latency-sensitive applications. The ever growing distributed data also renders it impractical to transport all the data over today's already-congested backbone networks to the remote cloud. To overcome these limitations, mobile edge computing (MEC) (a.k.a. fog computing) \cite{mao2017mobile,shi2016edge} has recently emerged as a new computing paradigm to push the frontier of computing applications, data and services away from centralized cloud computing infrastructures to the logical edge of a network, thereby enabling analytics and knowledge generation to occur closer to the data source.

\begin{figure}
  \centering
  \includegraphics[width=0.4\textwidth]{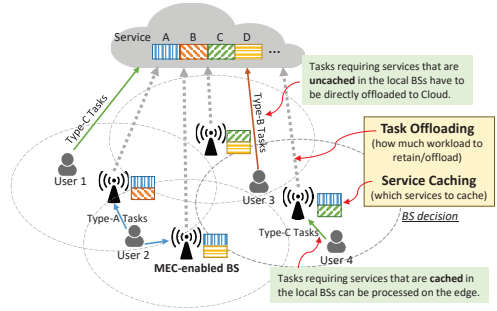}\\
  \caption{System illustration. Each BS can only cache a subset of services. Only user tasks requesting services cached in the BS can be executed by the BS. BSs have to jointly optimize service caching and task offloading. }\label{system}
  \vspace{-0.2 in}
\end{figure}

Considered as a major form of MEC, mobile base stations (BSs) endowed with cloud-like computing and storage capability, are able to serve end-users' computation requests as a substitute of the cloud \cite{mao2016dynamic}. Extra tasks exceeding the BS's computing capacity are further offloaded to the cloud, resulting in a hierarchical offloading structure among end-users, BSs and the cloud. While computation offloading has been the central theme of most recent works studying MEC, what is often ignored is the heterogeneity of mobile services and how these services are cached on BSs in the first place. Precisely, service caching (or service placement) refers to caching application services and their related databases/libraries in the edge server co-located with the BS, thereby enabling user tasks requiring these services to be executed. However, unlike the cloud which has huge and diverse resources, the limited computing and storage resources of BS allow only a small set of services to be cached at the same time.  As a result, which services are cached on the BS determines which tasks can be offloaded, thereby significantly affecting the edge computing performance. Figure \ref{system} provides a system illustration.

Optimal service caching faces many challenges. First, mobile services are heterogeneous in terms of not only required resources (e.g. online Matlab and AR services have different CPU and storage requirements \cite{mao2016dynamic}) but also popularity/demand among users. While the former is often fixed, the latter is changing both spatially and temporally. Therefore, service caching has to be adaptively updated in resource-limited BSs depending on the predicted service popularity.

Second, since the considered system operates in a highly stochastic environment with random demand arrivals, the long-term system performance is more relevant than the immediate short-term performance. However, the long-term resource constraint (e.g. energy consumption) couples the service caching decisions over time, and yet the decisions have to be made without foreseeing the future system dynamics.

Third, to accommodate the surging data demand of mobile users, the density of BSs in cellular networks has kept increasing since its birth to nowadays 4G networks, and is expected to reach about 40-50 BSs/km$^2$ in the next generation 5G networks \cite{ge20165g}. Dense cellular networks create a complex multi-cell environment where demand and resource are highly coupled in both spatial and temporal domains. Effective service caching and task offloading requires careful coordination among all BSs, and decentralized solutions are much favored in order to reduce complexity.

In this paper, we investigate the extremely compelling but much less studied problem of service caching in MEC-enabled cellular networks, and develop an efficient solution that jointly optimizes service caching and task offloading. The main contributions of this paper are summarized as follows.

(1) We formalize the joint service caching and task offloading problem in MEC-enabled dense cellular networks, for minimizing computation latency under a long-term energy consumption constraint. To our best knowledge, this is the first work that studies joint service caching and task offloading in multi-cell MEC systems.

(2) To solve this problem, we develop a novel online algorithm, called OREO (Online seRvice caching for mobile Edge cOmputing), to perform stochastic service caching in an online fashion without requiring future information. OREO is developed based on Lyapunov optimization, and we prove that it achieves close-to-optimal performance compared to the optimal algorithm with full future information, while bounding the potential violation of the energy consumption constraint.

(3) We develop a decentralized algorithm based on a variation of Gibbs sampling, which is a key subroutine of OREO, thereby enabling efficient decentralized coordination among the BSs. This makes our algorithm scalable to large networks.

(4) Extensive and practical simulations are carried out to evaluate the performance of the proposed algorithm.

The rest of this paper is organized as follows. Section II reviews related works. Section III presents the system model and formulates the problem. Section IV develops the OREO algorithm and analyzes its performance. Section V performs simulations, followed by the conclusion in Section VI.

\section{Related Work}
Computation offloading is the central theme of many prior studies in both mobile cloud computing (MCC) \cite{fernando2013mobile,buyya2009cloud} and mobile edge computing (MEC) \cite{mao2017mobile,shi2016edge}, which concerns what/when/how to offload users' workload from their devices to the edge servers or the cloud. Various works have studied different facets of this problem, considering e.g. stochastic task arrivals \cite{huang2012dynamic,liu2016delay}, energy efficiency \cite{xu2017online,sun2017emm,7997128}, collaboration \cite{chen2017socially,tanzil2016distributed,chen2017computation} etc. However, the implicit assumption is that edge servers can process whatever types of computation tasks that are offloaded from users without considering the availability of services in edge servers, which in fact is crucial in MEC due to the limited resources of edge servers.

Similar service caching/placement problems, known as virtual machine (VM) placement, have been investigated in conventional cloud computing systems. VM placement over multiple clouds is studied in \cite{tordsson2012cloud,li2009enacloud,gao2013multi}, where the goal is to reduce the deployment cost, maximize energy saving and improve user experience, given constraints on hardware configuration and load balancing. However, these works cannot be directly applied to design efficient service caching policies for MEC since mobile networks are much more complex and volatile, and optimization decisions are coupled both spatially and temporally. The most related work probably is \cite{yang2016cost}, which extends the idea to MCC systems and studied the joint optimization of service caching/placement over multiple cloudlets and load dispatching for end users' requests. There are several significant differences of our work. First, while the coverage areas are assumed to be non-overlapping for different cloudlets in \cite{yang2016cost}, BSs have overlapping coverage areas in our considered dense cellular network. Second, while only heuristic solutions are developed in \cite{yang2016cost}, we prove strong performance guarantee for our algorithm. Third, while the algorithm is centralized in \cite{yang2016cost}, our algorithm enables decentralized coordination among BSs. We also note that the term ``service placement'' was used by some other literature \cite{wang2017dynamic,zhang2013dynamic} in a different context. The concern there is to assign task instances to different clouds but there is no limitation on what types of services/applications that each cloud can run.

Service caching/placement is also related to content caching/placement in network edge devices \cite{shanmugam2013femtocaching}. For example, the authors of \cite{prabh2005energy} aim to find optimal locations to cache the data that minimize packet transmissions in wireless sensor nodes. The concept of FemtoCaching is introduced in \cite{shanmugam2013femtocaching} which studies content placement in small cell networks to minimize content access delay. The idea of using caching to support mobility has been investigated in \cite{wang2015dynamic}, where the goal is to reduce latency experienced by users moving between cells. Learning-based content caching policies are developed in \cite{muller2017context} for wireless networks with a priori unknown content popularity. While content caching mainly concerns with storage capacity constraints, service caching has to take into account both computing and storage constraints, and has to be jointly optimized with task offloading to maximize the overall system performance.

\section{System Model}
We divide time into discrete time slots each of which has a duration that matches the timescale at which service caching decisions can be updated. Although our model is not perfect, we believe that it is a reasonable and valuable first step towards studying dynamic service caching and task offloading in MEC systems. Future improvement directions are briefly discussed in the conclusion section.

\subsection{Network and Services}
We consider a mobile network of $N$ base stations (BSs), indexed by $\mathcal{N}$. Each BS $n \in \mathcal{N}$ is endowed with edge computing functionalities and hence can provide computing services to end users in its radio range. The network is divided into $M$ disjoint small regions, indexed by $\mathcal{M}$. User in each region $m$ can reach a set of BSs in the radio range, denoted by $\mathcal{B}_m \subseteq \mathcal{N}$, due to the dense deployment of BSs. We consider regions instead of individual users because service caching is a relatively long-term decision which cannot be updated very frequently, and region captures statistical information of user task requests. Each BS $n$ has a storage space $C_n$, which can be used to store data (e.g. libraries and databases) associated with specific computing services, and a CPU of maximum frequency $f_n$ (cycles per second), which can be used to process tasks offloaded from end users.

Service is an abstraction of applications that is hosted by the BS and requested by mobile users. Example services include video streaming, social gaming, navigation, augmented reality. Running a particular service requires caching the associated data, such as required libraries and databases, on the BS. We assume that there is a set of $K$ computing services, indexed by $\mathcal{K} = \{1,2,...,K\}$. Each service $k$ requires a storage space $c_k$. For service $k$, we assume that the workload (in terms of the required CPU cycles) of one corresponding task follows an exponential distribution with mean $\mu_k$. Therefore, services are heterogeneous in terms of both required storage and CPU.

The computation demand for service $k$ in time slot $t$ is described by a vector $\d^t_k = (d^t_{k,1},...,d^t_{k,M})$ where $d^t_{k,m}$ is the demand intensity generated by users in region $m$. Specifically, we consider that the task arrival  follows a Poisson process at rate $d^t_{k,m}$. In practice, a demand predictor can estimate the instantaneous demand prior to the beginning of time slot $t$ using some well-studied learning techniques (e.g. auto-regression analysis). Note that this prediction is short-term, only for the immediate next time slot, which is different from the long-term prediction required by an offline algorithm. Many prior studies show that such instantaneous workload can often be predicted with a high accuracy \cite{liu2012renewable}.

\subsection{Service Caching and Task Offloading Decisions}
At the beginning of each time slot $t$, each BS $n$ makes two decisions: service caching and task offloading.

\subsubsection{Service Caching}
Caching service $k$ allows tasks requiring service $k$ to be processed at the network edge, thereby reducing computation latency and improving user quality of experience. However, due to the limited storage space of a BS, not all services can be cached at the same time. Therefore, the BS has to judiciously decide which services to cache. Specifically, BS $n$ decides to cache a subset of services among $\mathcal{K}$. Let $a^t_{n,k} \in \{1, 0\}$ be a binary decision variable to denote whether service $k$ is cached or not on BS $n$ in time slot $t$. The service caching decision of BS $n$ is collected in $\a^t_n=\{a^t_{n,1},a^t_{n,1},\dots,a^t_{n,K}\}$.  Moreover, the service caching decisions are subject to the following capacity constraint
\begin{align}
\sum_{k} a^t_{n,k}c_k \leq C_n, \forall t, \forall n
\end{align}

Let $\mathcal{A}^t_{m,k} \subseteq \mathcal{B}_m$ denote the set of BSs that have service $k$ cached in time slot $t$ and hence can provide the corresponding computing service to region $m$. For analytical simplicity, we assume that demand $d^t_{k,m}$ in region $m$ is evenly distributed among BSs in $\mathcal{A}^t_{m,k}$. Nevertheless, other user-cell association rules (e.g. a user offloads task to the BS with the best channel condition among the ones who can provide the required service) can also be easily incorporated in our model. The demand for service $k$ to BS $n$ can be computed as $\lambda^t_{n,k} = a^t_{n,k}\sum_{m=1}^M\mathbf{1}\{n\in\mathcal{B}_m\}\frac{d^t_{k,m}}{|\mathcal{A}^t_{m,k}|}$.
Note that if there is no BS in $\mathcal{B}_m$ providing service $k$, then all tasks demanding service $k$ will be offloaded to the remote cloud for processing via any nearby BS. Let $\a^t=\{a^t_{n,1},a^t_{n,1},\dots,a^t_{n,K}\}$ denote the service caching decisions of all BSs in time slot $t$.

\subsubsection{Task Offloading}
Among the set of cached services, BS $n$ also has to decide the amount of tasks that are processed locally at the network edge. The remaining tasks will be offloaded to the remote cloud. Let $b^t_n \in [0,1]$ be a continuous decision variable to denote the fraction of service tasks that are processed locally at BS $n$. Hence, the amount of locally processed tasks is $b^t_n \sum_k \lambda^t_{n,k}$. We note that the actual task offloading actions are performed during the time slot when the tasks actually arrive and will depend on the specific task requirements. Nevertheless, the task offloading decisions in terms of the fraction of offloaded tasks can still be planned at a reasonably high granularity at the beginning of each time slot. Let $\b^t=\{b^t_1,b^t_2,\dots,b^t_n\}$ denote the task offloading decisions of all BSs in time slot $t$.

\subsection{Energy Consumption and Computation Delay Cost}
Different service caching and task offloading decisions result in different computation latency performance and incurs different computing energy consumption.

\subsubsection{Energy consumption} The BS dynamically adjusts its CPU speed depending on the task workload. To simplify our analysis, we assume that the BS processes tasks at its maximum CPU speed while choosing the minimum CPU speed when it is idle.  Assuming that the BS consumes a negligible energy under the minimum speed mode, the average computation energy consumption can be expressed as \cite{ren2013coca}:
\begin{align}
E^t_n(\a^t, \b^t) = \gamma_n + \kappa_n b^t_n \sum_{k} \mu_k a^t_{n,k} \lambda^t_{n,k}
\end{align}
where $\gamma_n$ is the static power regardless of the workload as long as BS $n$ is turned on, and $\kappa_n$ is the unit energy consumption when BS $n$ is processing tasks at its maximal speed $f_n$. In the above equation, $b^t_{n} \sum_k \mu_k a^t_{n,k} \lambda^t_{n,k}$ is the expected total number of CPU cycles required to process tasks at BSs.

In addition to computation energy consumption, the BS also incurs energy consumption due to load-independent operations. We denote it by $\tilde{E}_n^t$ which varies over time but can only be observed at the end of each time slot $t$.

\subsubsection{Computation delay cost}
To quantify the overall network performance, we introduce the notion of delay cost capturing the delay-induced revenue loss and/or user dis-satisfaction. The average computation delay for tasks processed by BS $n$ can be computed by modeling the service process as an M/G/1 queue and analyzing its sojourn time (i.e. service time plus waiting time). Since the task arrival of each service type is assumed to follow a Poisson process, the overall task arrival process (without differentiating the specific service types) is also Poisson. Let $\tilde{\lambda}^t_{n,k} = b^t_{n}\lambda^t_{n,k}$ be the task processed at BS $n$ for service $k$ and $\tilde{\lambda}^t_{n} = \sum_k \tilde{\lambda}^t_{n,k}$ be the total workload, which are results of the service caching and task offloading decisions of all BSs. Since there are possibly multiple types of services, the overall service time distribution is a random sampling among a number of exponential distributions. Specifically, the probability of the exponential distribution with mean $\mu_k$ being sampled is $\tilde{\lambda}^t_{n,k}/\tilde{\lambda}^t_n$. Let $s$ be the random variable representing the service time. Its first and second moments can be derived as$\mathbb{E}[s] = \sum_{k} \mu_k \tilde{\lambda}^t_{n,k}/\tilde{\lambda}^t_n$, $
\mathbb{E}[s^2] = \sum_{k} 2\mu^2_k \tilde{\lambda}^t_{n,k}/\tilde{\lambda}^t_n$. According to the Pollaczek-Khinchin formula for M/G/1 queuing system \cite{kleinrock1976queueing}, the expected sojourn time is therefore
\begin{align}
T^t_n(\a^t, \b^t) = \dfrac{1}{f_n}\mathbb{E}[s] + \frac{\tilde{\lambda}^t_n \mathbb{E}[s^2]}{2(f_n- \tilde{\lambda}_n^t \mathbb{E}[s])}
\end{align}
We assume that the remote cloud has ample computing power and hence, the computation delay for tasks offloaded to the cloud, denoted by $h^t$, is mainly due to the transmission delay. Therefore, the total expected computation delay cost for tasks arriving at BS $n$ is
\begin{align}
&D^t_n(\a^t, \b^t) = \tilde{\lambda}^t_n T^t_n(\a^t, \b^t) + (\lambda^t_n - \tilde{\lambda}^t_n) h^t \nonumber\\
&= \lambda^t_n h^t + \sum_k(\mu_k -h^t)\tilde{\lambda}^t_{n,k} + \frac{\sum_{k} \tilde{\lambda}^t_{n,k} \sum_k \mu^2_k \tilde{\lambda}^t_{n,k}}{1 - \sum_k\mu_k \tilde{\lambda}^t_{n,k}}
\end{align}

\subsection{Problem Formulation}
The goal of the network operator is to make joint service caching and task offloading decisions to minimize computation latency while keeping the total computation energy consumption low. The problem is formulated as follows:
\begin{subequations}
\begin{align}
(\textbf{P1})~\min_{\a^t, \b^t, \forall t}&~~\lim_{T\to\infty}\frac{1}{T}\sum_{t=1}^T\bigg[\sum_{n=1}^N D^t_n(\a^t, \b^t)\nonumber\\&~~~~~~~~~~~~~~~~~+h^t(\sum_{m,k}d^t_{k,m}-\sum^{N}_{n=1}\lambda^t_n)\bigg]\\
\text{s.t.}~&\lim_{T\to\infty}\frac{1}{T}\sum_{t=1}^T\sum_{n=1}^N \left(E^t_n(\a^t, \b^t)  + \tilde{E}^t_n \right)\leq Q \label{energy_constraint}\\
~~&\sum_{k}a^t_{n,k}c_k \leq C_n, \forall t, \forall n \label{storage_capacity}\\
~~& E^t_n(\a^t, \b^t)  + \tilde{E}^t_n\leq E^{\max}_n \label{max_energy}\\
~~& D^t_n(\a^t, \b^t)\leq D^{\max}_n\label{max_delay}
\end{align}
\end{subequations}
where $h^t$ is the service delay for tasks whose service data is not cached in the BSs. The first constraint \eqref{energy_constraint} is the long-term energy constraint for the network of BSs, which requires that the long-term average total energy consumption does not exceed an upper limit $Q$. This constraint couples the BS decision making both spatially (i.e. across BSs) and temporally (i.e. across time slots). The second constraint \eqref{storage_capacity} is due to the individual BS's storage space capacity. The third and fourth conditions \eqref{max_energy} and \eqref{max_delay} impose per-slot constraints on the maximum energy consumption and delay, respectively, for each BS.

The first major challenge that impedes the derivation of the optimal solution to the above problem is the lack of future information: optimally solving \textbf{P1} requires complete offline information (distributions of task demands in all time slots) which is difficult to predict in advance, if not impossible. Moreover, \textbf{P1} is a mixed integer nonlinear programming and is very difficult to solve even if the future information is known a priori. These challenges call for an online approach that can efficiently make service caching and offloading decisions on-the-fly without foreseeing the future.

\section{Online Service Caching and Task Offloading}
In this section, we first develop our online algorithm, called OREO (Online seRvice caching for mobile Edge cOmputing), and then show that it is efficient in terms of latency minimization compared to the optimal offline algorithm. Our algorithm is developed under the Lyapunov optimization framework which converts the original long-term optimization problem \textbf{P1} to per-slot optimization problems requiring only current slot information. Our algorithm also enables BSs to decide which services to cache and how much task workload to retain at the edge or to offload to the cloud in a distributed way.

\subsection{Lyapunov-based Online Algorithm}
A major challenge of directly solving \textbf{P1} is that the long-term energy constraint of BSs couples the service caching and task offloading decisions across different time slots. To address this challenge, we leverage the Lyapunov optimization technique and construct a (virtual) energy deficit queue to guide the service caching and task offloading decisions to follow the long-term energy constraint. Specifically, assuming $q(0) = 0$, we construct an energy deficit queue whose dynamics evolves as follows
\begin{align}
q(t+1) = \left[q(t) + \sum_n\left(E^t_n(\a^t, \b^t) + \tilde{E}^t_n \right) - Q\right]^+
\end{align}
where $q(t)$ is the queue backlog in time slot $t$ indicating the deviation of current energy consumption from the energy constraint. The Lyapunov function is defined as $L(q(t)) \triangleq \frac{1}{2}q^2(t)$, representing the ``congestion level" in energy deficit queue. A small value of $L(q(t))$ implies that the queue backlog is small, which means that the virtual queue has strong stability. To keep the energy deficit queue stable, i.e., to enforce the energy consumption constraints by persistently pushing the Lyapunov function towards a lower value, we introduce \emph{one-slot Lyapunov drift}, which is $\Delta(q(t))=\mathbb{E}[L(q(t+1))-L(q(t))|q(t)]$. Then we have
  \begin{align}
  \Delta(q(t))&=\frac{1}{2}\mathbb{E}\left[q^2(t+1)-q^2(t)\mid q(t)\right]\\
  &\stackrel{(\dag)}{\leq}\frac{1}{2}\mathbb{E}\left[(q(t)+\hat{E}^t-Q)^2-q^2(t)\mid q(t)\right]\\
 &=\frac{1}{2}(\hat{E}^t-Q)^2+q(t)\mathbb{E}\left[(\hat{E}^t-Q) \mid q(t)\right]\\
 &\leq B + q(t) \mathbb{E}\left[(\hat{E}^t-Q) \mid q(t)\right]
 \end{align}
where $\hat{E}^t=\sum_{n}(E^t_n(\a^t,\b^t) + \tilde{E}^t_n)$ and $B=\frac{1}{2}\left(\sum_nE_n^{\max}-Q\right)^2$. The inequality ($\dag$) comes from $(q(t)+\hat{E}^t-Q)^2 \geq [(q(t)+\hat{E}^t-Q)^+]^2$.

Under the Lyapunov optimization framework, the underlying objective of our optimal control decision is to minimize a supremum bound on the following \emph{drift-plus-cost} expression in each time slot:
\begin{align}\label{drift_plus_cost}
  &\Delta(q(t))+V\cdot\mathbb{E}\left[\hat{D}^t(\a^t,\b^t) \mid q(t) \right]\\
  &\leq B + q(t) \mathbb{E}\left[(\hat{E}^t-Q) \mid q(t)\right]+V\cdot\mathbb{E}\left[\hat{D}^t(\a^t,\b^t) \mid q(t) \right]\nonumber
\end{align}
where $\hat{D}^t=\sum_{n=1}^N D^t_n(\a^t, \b^t)+h^t(\sum_{m,k}d^t_{k,m}-\sum^{N}_{n=1}\lambda^t_n)$.

Our proposed algorithm OREO minimizes the right hand side of \eqref{drift_plus_cost} (see Algorithm 1). The network determines the service caching and task offloading strategies in each time slot by solving the optimization problem \textbf{P2} as follows
\begin{subequations}\label{P2}
\begin{align}
(\textbf{P2})~~\min_{\a^t, \b^t}~& \left(V\cdot \hat{D}^t(\a^t, \b^t) + q(t)\cdot \hat{E}^t(\a^t, \b^t)\right)\\
\text{s.t.}~~&\eqref{storage_capacity}, \eqref{max_energy}, \eqref{max_delay}
\end{align}
\end{subequations}

\begin{algorithm}
\caption{OREO algorithm}
\begin{algorithmic}[1]
\Statex \textbf{Input}: $q(0) \gets 0$, $\mu_k$, $c_k$, $C_n$, $E_n^{max}$,$D_n^{max}$;
\Statex \textbf{Output}: service caching decision $\{\a^1, \a^2, \dots, \a^T\}$, offloading decisions $\{\b^1, \b^2, \dots, \b^T\}$;
\For{$t=0~\text{to}~T-1$}
\State Predict service demand $d^t_{m,n}$;
\State Observe $h^t$, $\tilde{E}^t_n$, $R^t_n$;
\State Choose $\a$, $\b$ by solving \textbf{P2}:
\State $q(t+1)=[q(t) + \sum_n(E^t_n(\a^t, \b^t) + \tilde{E}^t_n ) - Q]^+$;
\EndFor
\end{algorithmic}\label{OREO}
\end{algorithm}

The positive parameter $V$ is used to adjust the tradeoff between computation latency minimization and the energy consumption minimization of BSs. Notice that solving \textbf{P2} requires only currently available information as input. By considering the additional term $q(t)\cdot \sum_n E^t_n(\a^t, \b^t)$, the network takes into account the energy deficit of BSs during current-slot service caching and task offloading. As a consequence, when $q(t)$ is larger, minimizing the energy deficit is more critical. Thus, our algorithm works following the philosophy of ``if violate the energy constraint, then use less energy'', and the energy deficit queue maintained without foreseeing the future guides the BSs towards meeting the energy constraint, thereby enabling online decision making. Now, to complete the algorithm, it remains to solve the optimization problem \textbf{P2}, which will be discussed in the next subsection.

\subsection{Distributed Optimization for \textbf{P2}}
In this subsection, we focus on solving \textbf{P2} which is a joint optimization problem aiming to find the optimal service caching and task offloading decisions for each time slot $t$. Since service caching decisions are binary and task offloading decisions are continuous, \textbf{P2} is a mixed-integer nonlinear programming. While there exist various centralized techniques (such as Generalized Benders Decomposition \cite{geoffrion1972generalized}) to solve it, these methods are usually computationally prohibitive and distributed solutions are much desired. In this paper, we present a distributed algorithm based on a variation of Gibbs sampling \cite{christian1999monte}, which determines the optimal decision pair $(\a^t, \b^t)$ in an iterative manner at the beginning of a time slot.

The algorithm works as follows. In each iteration, a randomly selected BS $n$ virtually changes its current service caching decision $\a_n^t$ to $\tilde{\a}_n^t$ (Line \ref{randm_pick}) and then the optimal offloading scheme $\b^t$ is derived by solving \eqref{P2}. However, when deriving $\b^t$, only neighboring BSs (i.e. BSs that have overlapping service areas with BS $n$) need to make the update since $\a_n^t$ affects the traffic distribution only among the neighborhood of BS $n$. Afterwards, the new delay cost $\tilde{f}$ restricted to the neighborhood of BS $n$ is obtained, and the service caching action of BS $n$ is updated to the new action $\tilde{\a}_n^t$ with probability $\eta$ and keeps unchanged (i.e. $\a^t_n$) with probability $1-\eta$ depending on the delay cost difference $\tilde{f} - f$ (Lines \ref{eta} and \ref{evolve}). Therefore, changing service caching decision is more likely to occur if the new action $\tilde{\a}_n^t$ results in a lower delay cost. At the end of the iteration, BS $n$ broadcast its current service caching decision to its neighboring BSs.

\textbf{Remark}: It is known that always choosing a better decision in combinatorial optimization can easily lead to a local optimality. To avoid being trapped in a local optimum, the proposed algorithm explores a new decision with a certain probability even though it may be worse than the current decision (i.e. $\tilde{f} > f$). The parameter $\tau>0$, referred as the smooth parameter, is used to control exploration versus exploitation (i.e. the degree of randomness). When $\tau$ is small, the algorithm tends to keep a new decision with larger probability if it is better than the current decision.  However, in this case, it takes more iterations to identify the global optimum since the algorithm may be stuck in a local optimum for a long time before exploring other solutions that lead to more efficient decisions. When $\tau\to+\infty$, the algorithm tries to explore all possible solutions from time to time without convergence.

\textbf{Remark}: The convergence rate of this algorithm can be further improved by letting multiple BSs evolve their service caching decisions simultaneously in each iteration, provided that they are sufficiently apart. Specifically, if a set of BSs do not have common neighboring BSs between any pair, then their service caching decisions do not affect each other and hence, they are allowed to evolve simultaneously.

\begin{algorithm}[b]
\caption{Distributed algorithm for OREO}
\begin{algorithmic}[1]
\Statex \textbf{Input}: service cache decision $\a^t\gets\bm{0}$; task offloading decision $\b^t\gets\bm{0}$; objective value $f\gets+\infty$;
\State Predict the service demand $\d^t_k$;
\State Randomly pick a BS $n\in\mathcal{N}$ and select service cache decision $\tilde{\a}_n\in\Phi$; \label{randm_pick}
\If{$\tilde{\a}^t_n$ is \emph{feasible}}
\State $\tilde{\a}^t\gets\{\a^t_{-n},\tilde{\a}^t_n\}$;
\State Observe computing demand $\lambda^t_{n,k}, \forall n, \forall k$;
\State Obtain $\tilde{\b}^t,\forall k$ by minimizing \textbf{P2}:
\Statex \qquad$\min\limits_{\tilde{\b}^t}~V\hat{D}^t(\tilde{\a}^t,\tilde{\b}^t) + q(t)\cdot \hat{E}^t(\tilde{\a}^t,\tilde{\b}^t)$
\State $\eta\gets \dfrac{1}{1+e^{(\tilde{f}-f)/\tau}}$ \label{eta}
\State With probability $\eta$, BS $n$ sets $\a^t_n\gets\tilde{\a}^t_n, \b^t\gets\tilde{\b}^t, f\gets\tilde{f}$; with probability $(1-\eta)$, BS $n$ keeps $\a^t_n$ unchanged \label{evolve}
\State Broadcast $\a^t_n$ to its neighboring BSs
\EndIf
\State Return $\a_n^t, \b^t$ if the stopping criterion is satisfied, otherwise, go to Line \ref{randm_pick}
\end{algorithmic}\label{dis_alg_p2}
\end{algorithm}

Next, we formally prove the convergence of our algorithm.

\begin{theorem} \label{converge_optimality}
  As $\tau>0$ decreases, the algorithm converges with a higher probability to the global optimal solution of \textbf{P2}. When $\tau \to 0$, the algorithm converges to the globally optimal solution with probability 1.
\end{theorem}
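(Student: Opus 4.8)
The plan is to model the distributed procedure of Algorithm~\ref{dis_alg_p2} as a discrete-time Markov chain on the finite set of feasible caching configurations, to identify its stationary distribution as a Gibbs (Boltzmann) law parameterized by $\tau$, and then to show that this law concentrates on the global minimizers of \textbf{P2} as $\tau\to 0$. First I would eliminate the continuous variables: for any fixed caching vector $\a$, the offloading vector $\b$ is obtained by minimizing the \textbf{P2} objective over the convex per-slot feasible set, so I define $f(\a)\define\min_{\b}\big(V\hat{D}^t(\a,\b)+q(t)\hat{E}^t(\a,\b)\big)$ subject to \eqref{max_energy} and \eqref{max_delay}. Since every $a^t_{n,k}\in\{0,1\}$ and both $N$ and $K$ are finite, the collection $\Omega$ of configurations satisfying \eqref{storage_capacity}--\eqref{max_delay} is a finite state space, and solving \textbf{P2} is equivalent to minimizing $f$ over $\Omega$.

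Next I would write the one-step kernel. In each iteration a BS $n$ is picked uniformly (probability $1/N$), a candidate local decision $\tilde{\a}_n\in\Phi$ is drawn uniformly, and, when the resulting $\a'=\{\a_{-n},\tilde{\a}_n\}$ is feasible, it is accepted with probability $\eta=\big(1+e^{(f(\a')-f(\a))/\tau}\big)^{-1}$. Thus for $\a'\neq\a$ differing in a single BS,
\begin{align}
P_\tau(\a,\a')=\frac{1}{N|\Phi|}\cdot\frac{1}{1+e^{(f(\a')-f(\a))/\tau}},
\end{align}
with the complementary mass assigned to the self-loop. Aperiodicity is immediate because $0<\eta<1$ forces $P_\tau(\a,\a)>0$ at every state; for irreducibility I would show that any two feasible configurations communicate through a chain of single-BS updates that remains inside $\Omega$. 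Establishing this connectivity of the feasible region under single-coordinate moves is the step I expect to be the most delicate, since the storage and per-slot constraints could in principle fragment $\Omega$. Granting it, the chain is finite, irreducible and aperiodic, hence admits a unique stationary distribution to which it converges from any start.

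The algebraic core is to verify that $\pi_\tau(\a)\propto e^{-f(\a)/\tau}$ obeys detailed balance. For a single-BS transition,
\begin{align}
\pi_\tau(\a)P_\tau(\a,\a')=\frac{1}{ZN|\Phi|}\cdot\frac{e^{-f(\a)/\tau}}{1+e^{(f(\a')-f(\a))/\tau}}=\frac{1}{ZN|\Phi|}\cdot\frac{1}{e^{f(\a)/\tau}+e^{f(\a')/\tau}},
\end{align}
which is symmetric in $\a$ and $\a'$ and therefore equals $\pi_\tau(\a')P_\tau(\a',\a)$ (for $\a'=\a$ detailed balance is trivial). By uniqueness of the stationary distribution, $\pi_\tau$ with normalizer $Z=\sum_{\a\in\Omega}e^{-f(\a)/\tau}$ is exactly the limiting law of the algorithm.

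Finally I would let $\tau\to 0^+$. With $f^*=\min_{\a\in\Omega}f(\a)$ and $\mathcal{A}^*=\{\a\in\Omega:f(\a)=f^*\}$, any optimal $\a^*$ satisfies $\pi_\tau(\a^*)=\big(\sum_{\a\in\Omega}e^{-(f(\a)-f^*)/\tau}\big)^{-1}$, where each non-optimal term tends to $0$ and each optimal term equals $1$; hence $\pi_\tau(\mathcal{A}^*)\to 1$, i.e.\ the algorithm converges to a global optimum of \textbf{P2} with probability one. Moreover every non-optimal exponent $-(f(\a)-f^*)/\tau$ is increasing in $\tau$, so the denominator increases with $\tau$ while $\pi_\tau(\mathcal{A}^*)$ increases as $\tau$ decreases, which yields the monotone ``higher probability'' statement for smaller $\tau$ and completes the argument.
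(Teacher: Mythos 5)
Your proposal is correct and follows essentially the same route as the paper's own proof: model the single-BS update dynamics as a finite Markov chain, verify that the Gibbs law $\pi_\tau(\a)\propto e^{-f(\a)/\tau}$ is its unique stationary distribution (you via detailed balance, the paper via summed balance equations and symmetry), and let $\tau\to 0$ to concentrate the mass on the global minimizers. If anything, your write-up is slightly more careful than the paper's: you handle ties among optima, you supply the monotonicity-in-$\tau$ argument for the first claim of the theorem (which the paper omits), and you explicitly flag the irreducibility of the feasible set under single-coordinate moves as the delicate step, whereas the paper merely asserts it.
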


\begin{proof}
See Appendix \ref{proof_converge_optimality}
\end{proof}

\subsection{Performance Analysis}
This subsection presents the performance analysis of OREO using the Lyapunov optimization technique.
\begin{theorem}\label{OREO_performance_guarantee}
By applying OREO, the time-average system delay satisfies:
   \begin{equation*}\label{system_delay_bound}
        \lim_{T\rightarrow\infty}\dfrac{1}{T}\sum_{t=0}^{T-1}\mathbb{E}\left[\hat{D}^t(\a^t,\b^t)\right]<\hat{D}^{opt}+\dfrac{B}{V}
   \end{equation*}
and the time-average energy consumption of BSs satisfies:
\begin{equation*}\label{energy_defit_bound}
\lim_{T\rightarrow\infty}\dfrac{1}{T}\sum_{t=0}^{T-1}\mathbb{E}\left[\hat{E}^t(\a^t,\b^t)\right]\leq\dfrac{B}{\epsilon}+\dfrac{V}{\epsilon}(\hat{D}^{max}-\hat{D}^{opt})+Q
\end{equation*}
where $\hat{D}^{opt}=\lim\limits_{T\to\infty}\dfrac{1}{T}\sum\limits_{t=0}^{T-1} \sum\limits_{n=1}^{N}\mathbb{E}\left\{D_n^t(\a^{opt,t},\b^{opt,t})\right\}$ is the optimal system delay to $\textbf{P2}$, $\hat{D}^{max}$ is the largest system delay, and $\epsilon>0$ is a constant which represents the long-term energy surplus achieved by some stationary control strategy.
\end{theorem}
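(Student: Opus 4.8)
The plan is to follow the standard drift-plus-penalty argument of Lyapunov optimization, exploiting the fact that OREO, by construction, selects in each slot the decision $(\a^t,\b^t)$ that minimizes the right-hand side of the drift-plus-cost bound \eqref{drift_plus_cost}. The central device is to compare OREO against two queue-independent benchmark policies: an optimal stationary randomized policy attaining $\hat{D}^{opt}$ while respecting the energy budget in the mean, and a strictly feasible $\epsilon$-slack policy whose existence is exactly the surplus hypothesis in the statement. Because the supremum bound \eqref{drift_plus_cost} holds for the \emph{minimizing} action, its value under OREO cannot exceed the value produced by either benchmark, and this single observation drives both inequalities.

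First I would establish the comparison step for the delay bound. Substituting the optimal stationary benchmark $(\a^{*t},\b^{*t})$, which by construction is independent of $q(t)$ and satisfies $\mathbb{E}[\hat{E}^{*t}-Q]\le 0$, the term $q(t)\,\mathbb{E}[\hat{E}^{*t}-Q\mid q(t)]=q(t)\,\mathbb{E}[\hat{E}^{*t}-Q]$ is nonpositive, so $\Delta(q(t))+V\,\mathbb{E}[\hat{D}^t\mid q(t)]\le B+V\hat{D}^{opt}$. Taking total expectations, summing the telescoping drift over $t=0,\dots,T-1$, and using $L(q(0))=0$ together with $L(q(T))\ge 0$ to discard the boundary terms gives $V\sum_t\mathbb{E}[\hat{D}^t]\le T(B+V\hat{D}^{opt})$; dividing by $VT$ and letting $T\to\infty$ yields the first claim $\frac1T\sum_t\mathbb{E}[\hat{D}^t]\le \hat{D}^{opt}+B/V$.

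For the energy bound I would instead insert the $\epsilon$-slack benchmark, for which $\mathbb{E}[\hat{E}^{*t}-Q]\le-\epsilon$, and bound its delay by the worst case $\hat{D}^{max}$, producing the negative-drift inequality $\Delta(q(t))+V\,\mathbb{E}[\hat{D}^t\mid q(t)]\le B+V\hat{D}^{max}-\epsilon\,q(t)$. Summing over slots, using nonnegativity of $L(q(T))$ and treating $\hat{D}^{opt}$ as a lower bound on the achieved per-slot delay, leads to a bound on the accumulated backlog of the form $\epsilon\sum_t\mathbb{E}[q(t)]\le T\bigl(B+V(\hat{D}^{max}-\hat{D}^{opt})\bigr)$. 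I would then convert this backlog bound into the desired energy statement through the queue recursion itself: since $q(t+1)\ge q(t)+\hat{E}^t-Q$ telescopes to $\sum_t(\hat{E}^t-Q)\le q(T)$, the time-average energy is controlled by the backlog, which produces the stated $\frac{B}{\epsilon}+\frac{V}{\epsilon}(\hat{D}^{max}-\hat{D}^{opt})+Q$.

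I expect the main obstacle to be the two benchmark-existence facts rather than the algebra. One must guarantee a stationary randomized control achieving $(\hat{D}^{opt},\,\mathbb{E}[\hat{E}^t]\le Q)$ and a strictly feasible $\epsilon$-slack control, both independent of $q(t)$, so that the conditional expectations in the comparison collapse to unconditional ones; this is where the per-slot feasibility constraints \eqref{storage_capacity}--\eqref{max_delay} and the stationarity of the demand process must be invoked. A secondary delicate point is the step $\mathbb{E}[\hat{D}^t\mid q(t)]\ge\hat{D}^{opt}$ used in the energy bound, which I would justify by interpreting $\hat{D}^{opt}$ as a lower bound on the achievable per-slot delay; everything downstream of these two facts is mechanical manipulation of the one-slot drift inequality.
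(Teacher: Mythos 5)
Your proposal takes essentially the same route as the paper: both bounds come from plugging queue-independent stationary benchmark policies into the one-slot drift-plus-cost bound that OREO minimizes --- a budget-feasible stationary policy for the delay bound (the paper phrases this as a $\delta$-slack policy whose existence follows from standard Lyapunov-optimization theory, then lets $\delta\to 0$), and a strictly $\epsilon$-slack policy for the energy bound --- followed by telescoping the drift. Your delay argument and your derivation of the accumulated-backlog bound $\epsilon\sum_t\mathbb{E}[q(t)]\le T\bigl(B+V(\hat{D}^{max}-\hat{D}^{opt})\bigr)$, including the use of $\hat{D}^{opt}$ as a lower bound on OREO's achieved delay, match the paper's steps one for one.

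The one place you deviate is the final conversion from the backlog bound to the energy bound, and as stated it does not close. You telescope $q(t+1)\ge q(t)+\hat{E}^t-Q$ to get $\sum_{t=0}^{T-1}(\hat{E}^t-Q)\le q(T)$ and then assert that the backlog bound controls this; but your backlog bound controls the time average $\frac{1}{T}\sum_t\mathbb{E}[q(t)]$, not the endpoint $\mathbb{E}[q(T)]$, and a bound on the average of a nonnegative sequence gives no direct control of its last term (the queue could be small on average yet large precisely at time $T$). The paper avoids this by \emph{not} telescoping: since $q(t)\ge 0$, the recursion already gives the per-slot inequality $q(t+1)\ge \hat{E}^t-Q$, and summing that pairs directly with the time-average backlog bound to produce $\frac{B}{\epsilon}+\frac{V}{\epsilon}(\hat{D}^{max}-\hat{D}^{opt})+Q$. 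Alternatively, your telescoped version can be repaired from the same drift inequality you already have: summing $\Delta(q(t))\le B+V(\hat{D}^{max}-\hat{D}^{opt})-\epsilon q(t)$ and retaining the term $\mathbb{E}[L(q(T))]=\frac{1}{2}\mathbb{E}[q^2(T)]$ gives, via Jensen, $\mathbb{E}[q(T)]\le\sqrt{2T\bigl(B+V(\hat{D}^{max}-\hat{D}^{opt})\bigr)}$, hence $\mathbb{E}[q(T)]/T\to 0$; this in fact yields the stronger conclusion $\lim_{T\to\infty}\frac{1}{T}\sum_t\mathbb{E}[\hat{E}^t]\le Q$, which implies the theorem's bound. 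Either fix is a one-line addition, so the gap is minor, but it is a real one: the phrase ``the time-average energy is controlled by the backlog'' is exactly the step that needs an argument.
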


\begin{proof}
	See Appendix \ref{proof_OREO_performance_guarantee}.
\end{proof}	

The above theorem demonstrates an $[O(1/V ), O(V )]$ delay-energy tradeoff. OREO asymptotically achieves the optimal performance of the offline problem \textbf{P1} by letting $V\to\infty$. However, the optimal performance of \textbf{P1} is achieved at the price of a higher energy consumption, as a larger energy deficit queue is required to stabilize the system and hence convergence is postponed. This also implies that the time-average energy consumption grows linearly with $V$.

\section{Simulation}
In this section, we carry out simulations to evaluate the performance of OREO. We simulate a 500m$\times$500m area served by 9 BSs regularly deployed on a grid network. The serving radius for each BS is set as 150m. The whole area is divided into 25 regions and the demand for service $k$, $d^t_{m,k}~\forall k $, in region $m$ during slot $t$ follows a uniform distribution $d^t_{m,k}\in[0,12]$. The actual service demand in region $m$ is formulated as a Poisson process with arrival rate $d^t_{m,k}$. Other main parameters are collected in Table \ref{para_set}.

\begin{table}
	\renewcommand\arraystretch{1}
	\centering
	\caption{Simulation setup: system parameters}
	\begin{tabular}{l|c}
		\hline
		
		Parameters & Value\\
		\hline
		\hline
		Total service types, $K$ & 10 \\
		BS service rate $f_n$   & 10 GHz \\
		BS communication distance & 130m\\
		CPU cycles requirement of service $k$,  $\mu_k$  &  $[0.1, 0.5]$ GHz/task \\
		BS storage space $C_n$ & 200 GB\\
		Storage space requirement of service $k$ $c_k$ & $[20,100]$ GB \\
		Unit energy consumption, $\kappa_n$  & 1 kWh\\
		Computation delay for tasks offloaded to cloud, $h$& $[2,4] $ sec/task\\
		Smooth factor, $\tau$  & $10^{-2}$\\
		Energy for cooling and communication traffic, $\tilde{E}^t_n$ & $[0,3]$ kWh\\
		\hline
	\end{tabular}
	\label{para_set}
\end{table}

We compare OREO with three benchmarks. \textbf{Non-cooperative service caching}: BSs cache services with the largest demand in the serving region. Each BS works independently without mutual communication and the long-term energy consumption constraint is ignored. \textbf{Centralized delay-optimal service caching}: A centralized service caching decision is found for all BSs to minimize the system delay. The decision is made regardless of the long-term energy consumption constraint. \textbf{Myopic service caching}: We impose a hard energy consumption constraint in each time slot and minimize the system delay. This method can also satisfy the long-term energy constraint without requiring future information. However, it is less adaptive and purely myopic.

\subsection{Performance comparison}
Fig. \ref{PC_delay} and Fig. \ref{PC_eng} show the time average system delay and energy consumption, respectively. It shows that OREO achieves near-to-optimal delay performance while closely following the long-term energy constraint. The centralized optimal scheme achieves the lowest system delay as expected. However, it is achieved at a cost of large energy consumption. By contrast, OREO slightly sacrifices the delay performance to satisfy the energy consumption constraint. For the myopic service caching, because a hard energy constraint is imposed in every time slot, the long-term energy consumption constraint is also satisfied. However, because very little energy consumption is incurred due to very low task demand in some time slots, the time average energy consumption can be far below the long-term constraint, resulting in inefficient energy usage. In the non-cooperative case, BSs make decisions individually based on their predicted demand. This strategy neglects the interdependence among BSs and results in inferior performance in both system delay minimization and energy saving.

\begin{figure*}
	\begin{minipage}[t]{0.33\linewidth}
		\centering
		\includegraphics[width=0.95\textwidth]{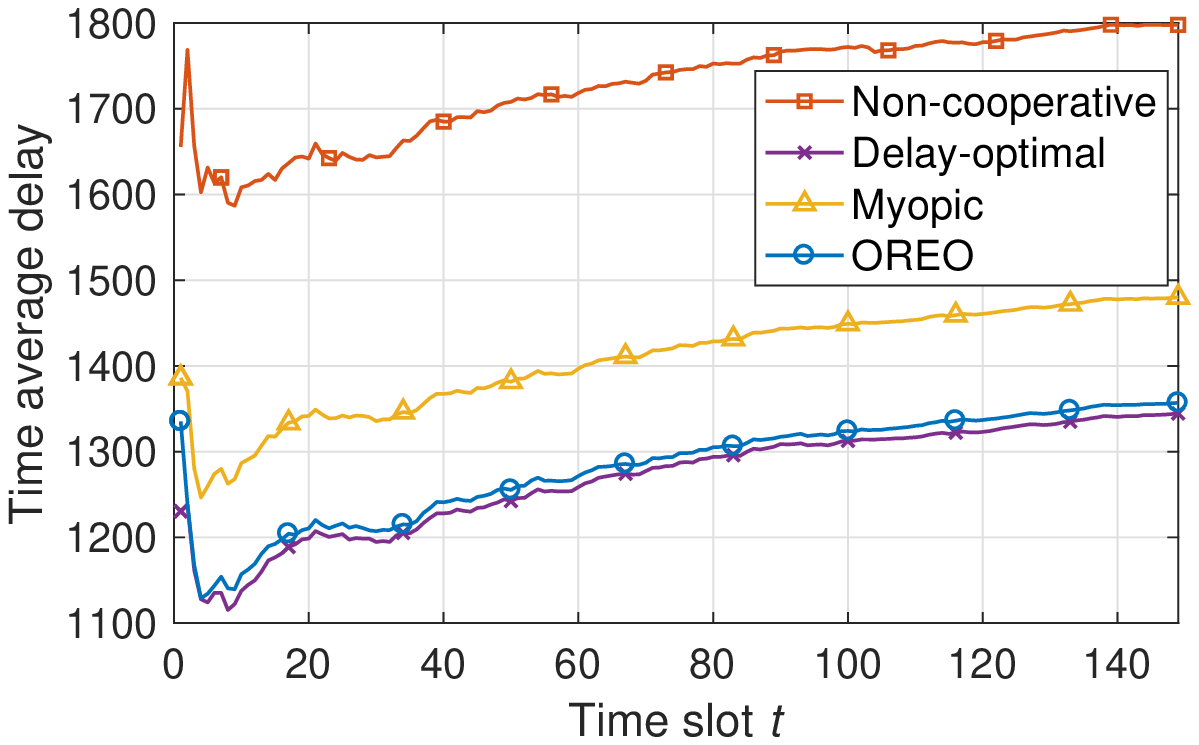}
		\caption{Time average delay}\label{PC_delay}
	\end{minipage}%
	\begin{minipage}[t]{0.33\linewidth}
		\centering
		\includegraphics[width=0.95\textwidth]{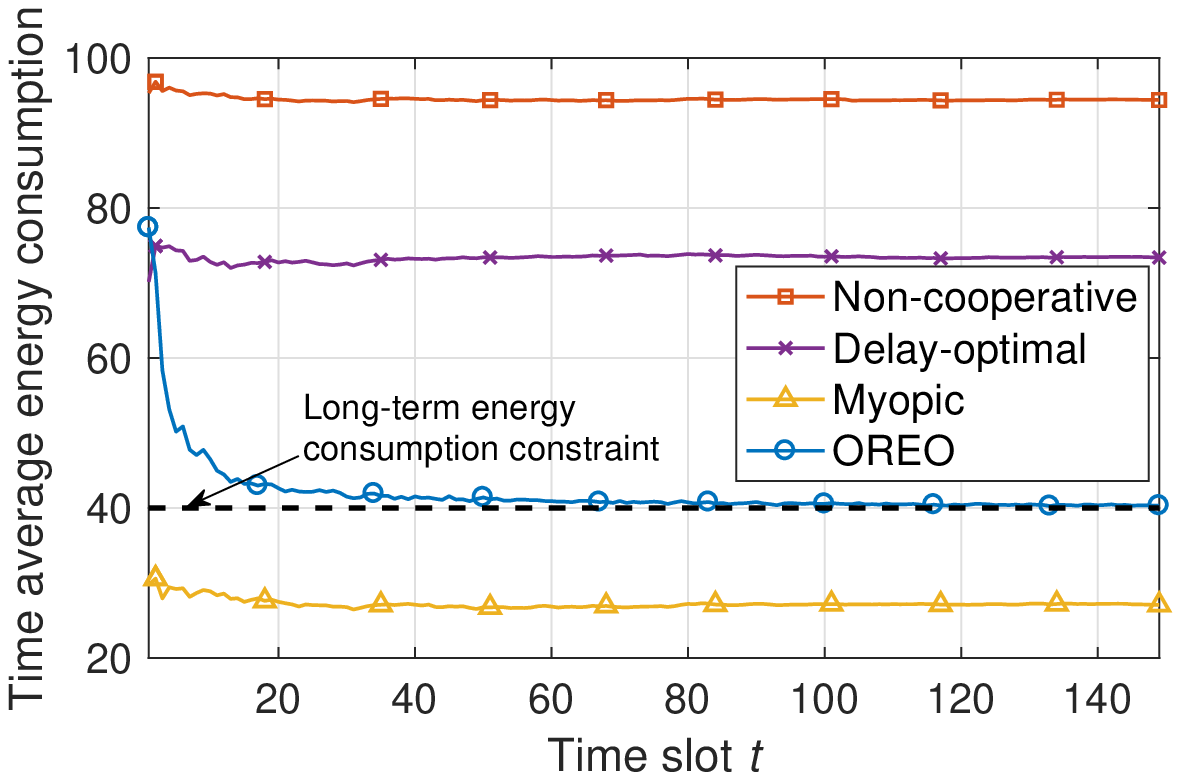}
		\caption{Time average energy consumption}\label{PC_eng}
	\end{minipage}%
	\begin{minipage}[t]{0.33\linewidth}
		\centering
		\includegraphics[width=0.95\textwidth]{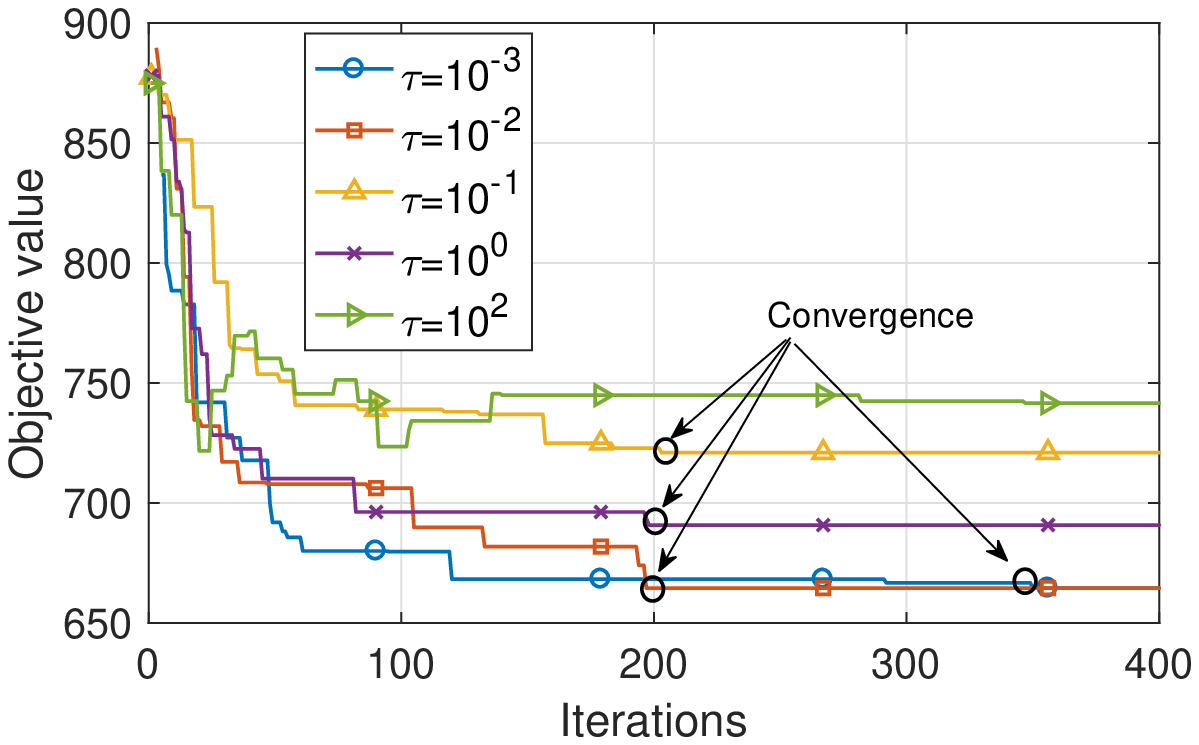}
		\caption{Convergence of distributed algorithm}\label{tau_evo}
	\end{minipage}%
\end{figure*}

\begin{figure*}
	\begin{minipage}[t]{0.33\linewidth}
		\centering
		\includegraphics[width=0.95\textwidth]{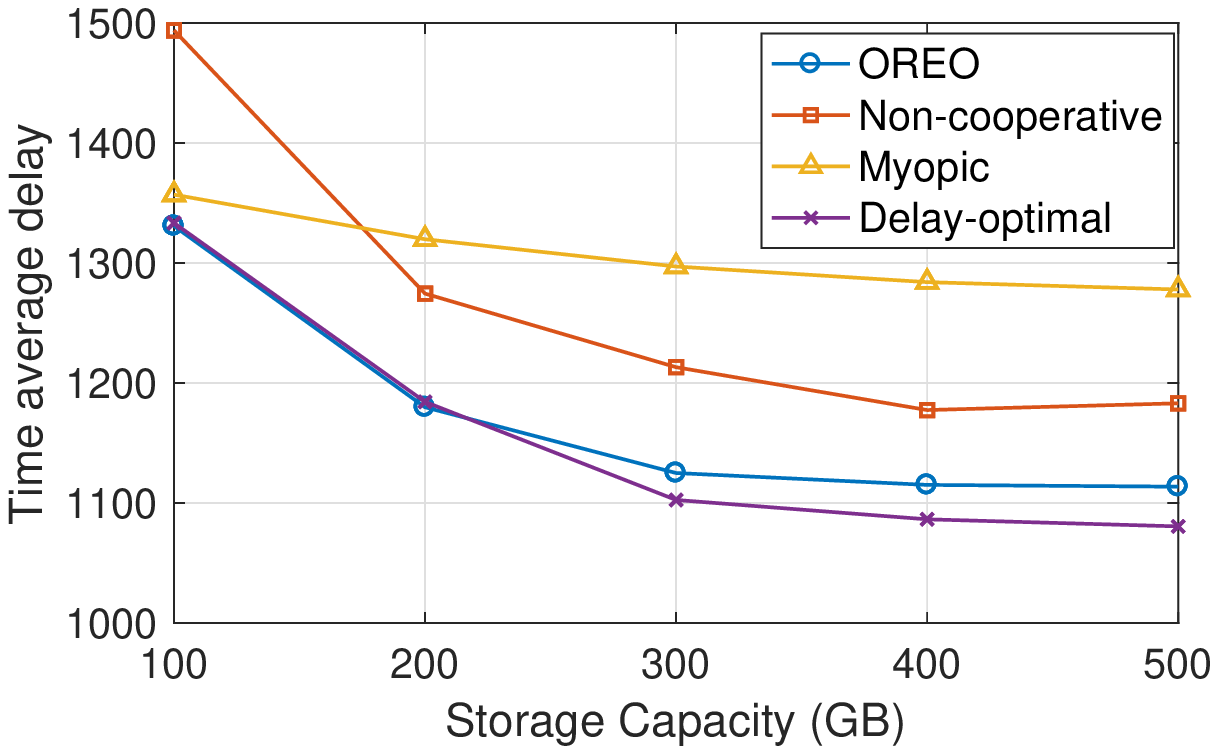}
		\caption{Time average delay with different $C$}\label{vary_C_delay}
	\end{minipage}%
	\begin{minipage}[t]{0.33\linewidth}
		\centering
		\includegraphics[width=0.95\textwidth]{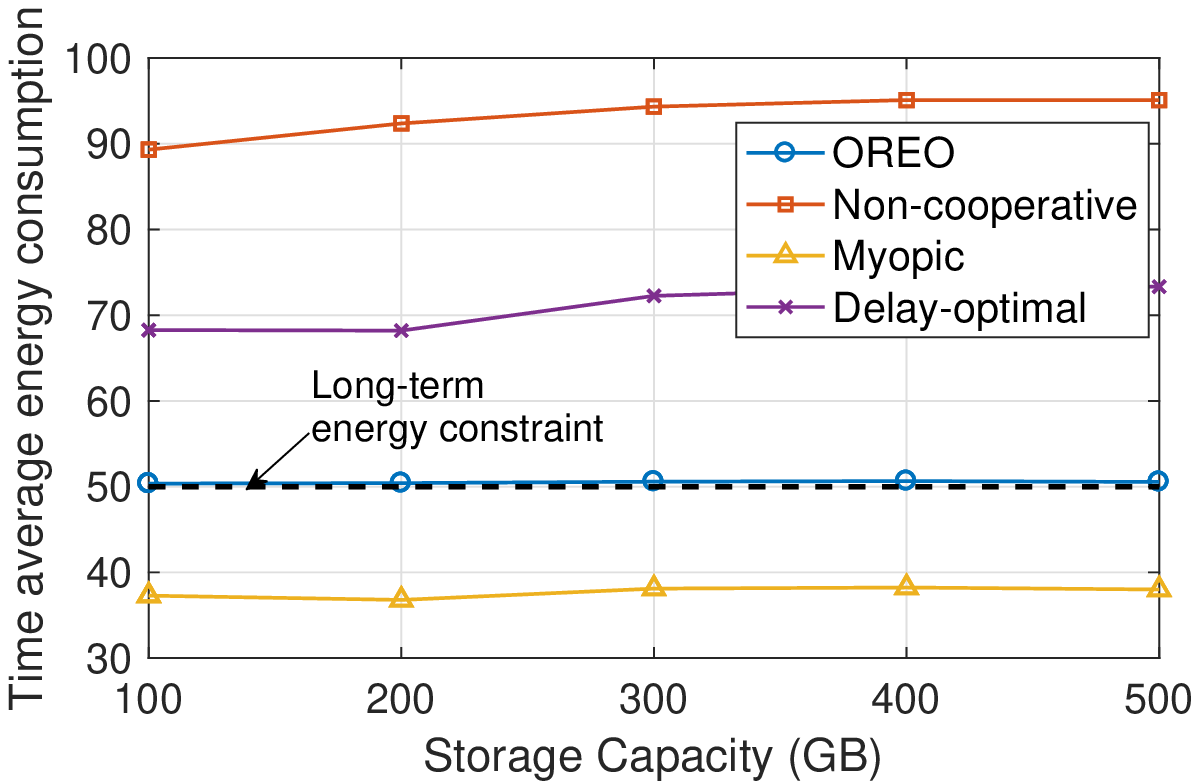}
		\caption{Time average energy with different $C$}\label{vary_C_eng}
	\end{minipage}%
	\begin{minipage}[t]{0.33\linewidth}
		\centering
		\includegraphics[width=1\textwidth]{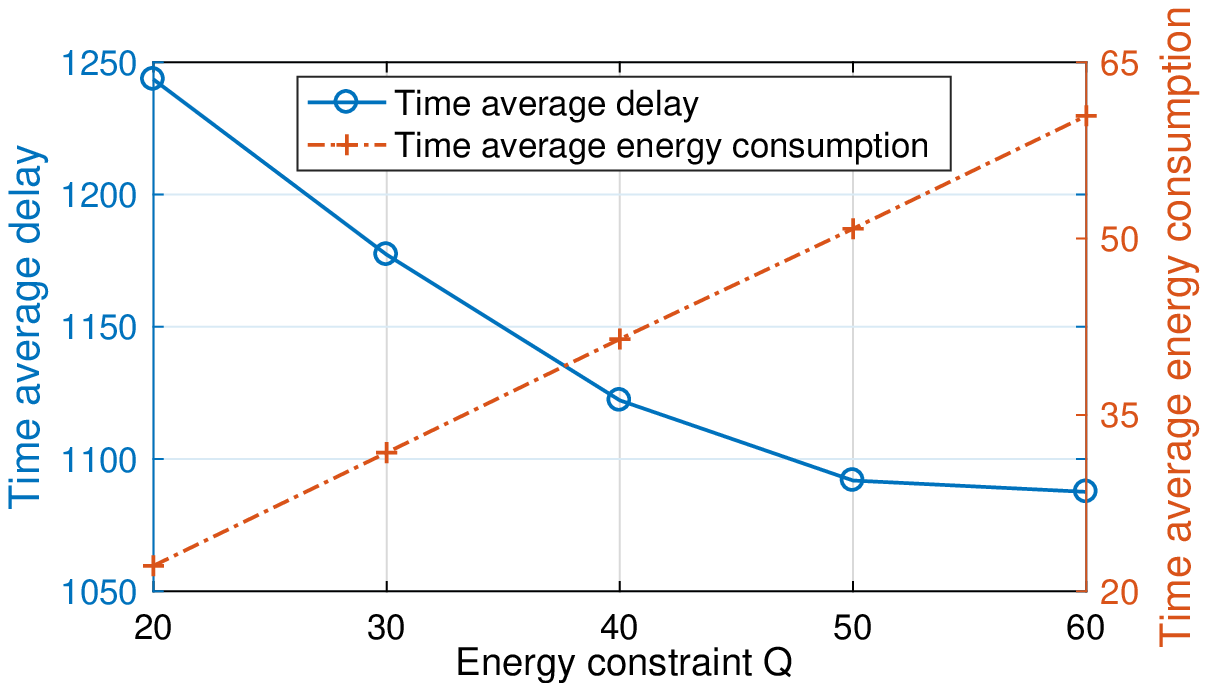}
		\caption{Time average delay with different $Q$}\label{Q_delay_eng}
	\end{minipage}%
\end{figure*}

\begin{figure*}
	\begin{minipage}[t]{0.33\linewidth}
		\centering
		\includegraphics[width=1\textwidth]{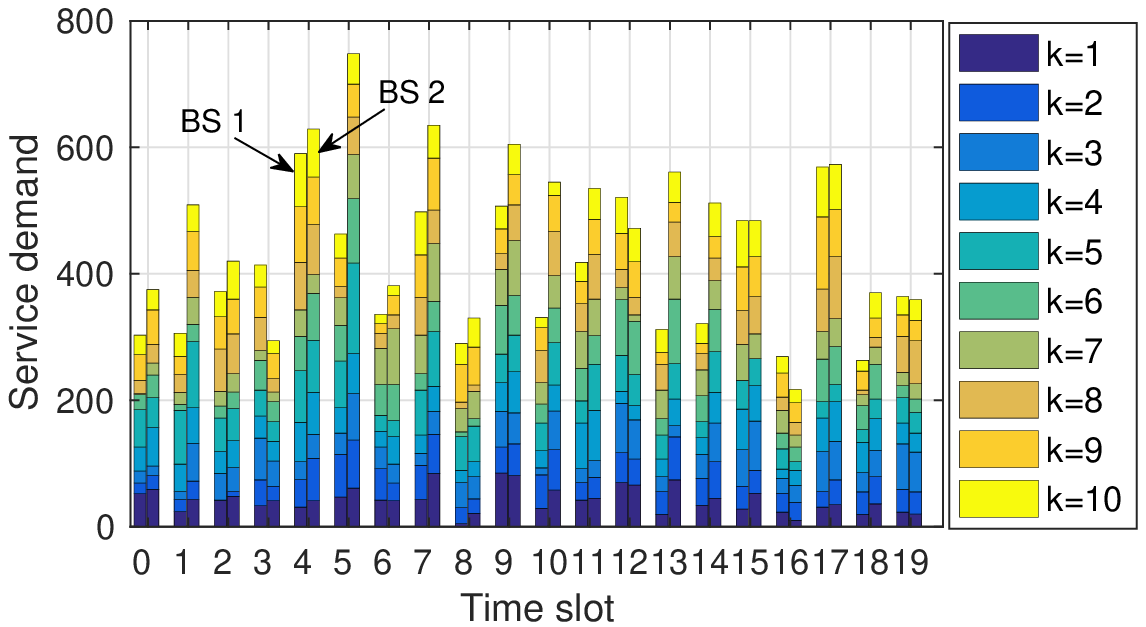}
		\caption{Predicted service demand}\label{demand_t}
	\end{minipage}%
	\begin{minipage}[t]{0.33\linewidth}
		\centering
		\includegraphics[width=1\textwidth]{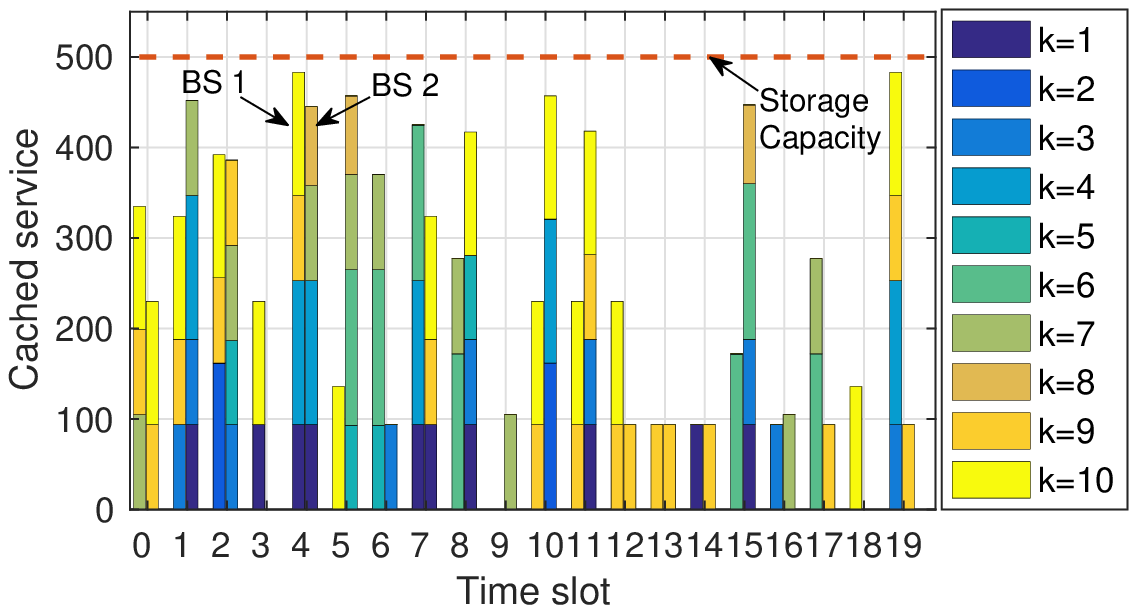}
		\caption{Cached service}\label{cached_serv}
	\end{minipage}%
	\begin{minipage}[t]{0.33\linewidth}
		\centering
		\includegraphics[width=0.98\textwidth]{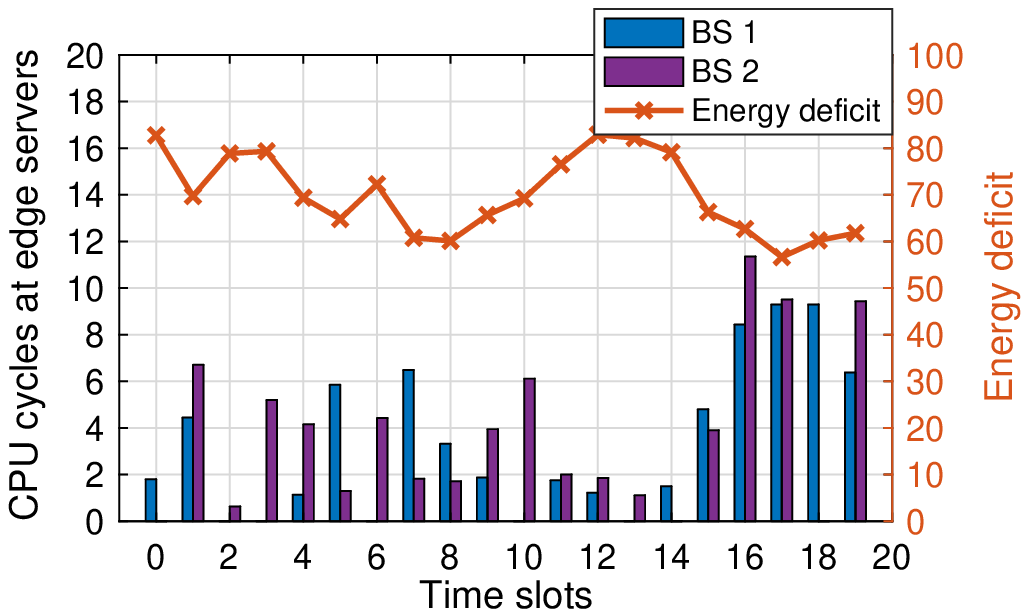}
		\caption{Energy deficit and workload at edge}\label{cycle_deficit}
	\end{minipage}%

\end{figure*}

\subsection{Convergence of the distributed algorithm}
Fig. \ref{tau_evo} shows the convergence process when running the distributed algorithm. We see that when $\tau=10^{-3}$, the algorithm converges quickly to superior decisions. However, it stays in several local optimal solutions for a while before identifying the global optimum. By increasing $\tau$ to $10^{-2}$, the algorithm can find the global optimum with much fewer iterations. However, keeping increasing the parameter $\tau$ impedes the identification of global optimum and results in the convergence to inferior solutions.

\subsection{Impact of storage capacity}
Fig. \ref{vary_C_delay} and \ref{vary_C_eng} show the impact of storage capacity on the converged system delay and energy consumption after 150-slot simulation. It can be observed that the system delay decreases as the storage capacity increases since a larger number of services can be cached at BSs. Moreover, when the storage capacity is small, the system delay achieved by OREO is identical to the centralized delay-optimal scheme. This is due to the fact that few computing tasks are processed at BSs (most of them have to be offloaded to the cloud) and the energy deficit queue $q(t)$ is zero in most time slots and hence, the problem \textbf{P2} degenerates to the delay-optimal format. As the storage capacity increases, the system delay of OREO deviates from that of the delay-optimal scheme as a result of meeting the energy consumption constraint. Fig. \ref{vary_C_eng} shows that the long-term energy consumption of OREO closely follows the energy consumption constraint for all levels of storage capacity, while other three benchmarks either overuse or underuse the predetermined energy budget.

\subsection{Impact of energy constraint}
Fig. \ref{Q_delay_eng} presents the converged time average system delay and energy consumption of OREO under different energy consumption constraints $Q$. It is straightforward to see that the OREO converges to a lower system delay with a larger energy consumption constraint since more tasks are allowed to be processed at BSs. However, the performance gain by increasing the energy constraints become modest when the constraint $Q$ is large. We also see that OREO successfully converges to the predetermined energy consumption constraint.

\subsection{Impact of demand patterns}

Fig. \ref{demand_t} shows the predicted demands for different services of two geographical adjacent base stations, BS 1 and BS 2, in 20 time slots. Since BS 1 and BS 2 have overlapping regions, their demand patterns are correlated. Fig. \ref{cached_serv} shows the corresponding service caching decision across these 20 time slots, where the length of color bar denotes the occupied storage space of services. As can be seen, even when the two BSs have similar demand patterns, their service caching decisions can be dramatically different (see slot 2 for an example). Such cooperation indeed helps to accommodate more types of services and hence more computation tasks, thereby improving the overall system efficiency. Since service caching decisions alone do not determine the system performance and its evolution, we show in Fig. \ref{cycle_deficit} the total number of CPU cycles used to process computation tasks at these two BSs, which is a reflection of the joint service caching and offloading decisions as well as the demand. As can be observed, when the energy deficit is small, the BSs tend to keep more workload locally to minimize the computation delay (e.g. slot 16-18); when the energy deficit is large, BSs process less workload at the edge to reduce the energy consumption. In this way, the energy consumption can be pushed to satisfy the pre-determined constraint.

\section{Conclusion}
In this paper, we studied joint service caching and task offloading for MEC-enabled dense cellular networks. We proposed an efficient online and decentralized algorithm that tailors service caching decisions to both temporal and spatial service popularity patterns. The proposed algorithm is easy to implement while providing provable performance guarantee. There are a few limitations in the current model that demand future research effort. First, user-cell association (load dispatching) decisions can be incorporated in the joint optimization framework. Second, task workload can be further balanced by allowing workload transfer among peer BSs.

\appendix
\subsection{Proof of Theorem \ref{converge_optimality}} \label{proof_converge_optimality}
\begin{proof}
	Let $\Phi=\{\bm{\phi}_1, \bm{\phi}_2, \dots, \bm{\phi}_L\}$ be the action space of service cache decision $\a^t_n$, where $L$ is given by the Bell number $\sum_{k=1}^K {K \choose k}$. At an arbitrary time slot, BS $n$ chooses a service caching decision $\a^t_n\in\Phi$.
	
	For notational convenience, we drop the time index $t$ and denote the BSs' service caching decision by $\a$. Following the iterations in Algorithm \ref{dis_alg_p2}, $\a$ evolves as a N-dimensional Markov chain in which the $i$-th dimension corresponds to BS $i$'s service caching decision. For the ease of presentation, we begin with a 2-BS case and denote the state of the Markov chain as $S_{\a_1,\a_2}$, where $\a_i\in\Phi, i=1,2$. Since only one BS is selected to explore a new service caching decision at each iteration with equal probability among all BSs, we have
	\begin{flalign}\label{trans_prob}
	\text{Pr}(S_{\a^\prime_1,\a^\prime_2} | S_{\a_1,\a_2})=&&
	\end{flalign}
	\begin{equation*}
	\left\{
	\begin{split}
	&\frac{e^{-f(S_{\a^\prime_1,\a^\prime_2})/\tau}}{2L(e^{-f(S_{\a^\prime_1,\a^\prime_2})/\tau}+e^{-f(S_{\a_1,\a_2})/\tau})}, \a^\prime_1= \a_1 ~ \text{or} ~ \a^\prime_2= \a_2 \\
	&0, \qquad\qquad\qquad\qquad\qquad\qquad\qquad \text{otherwise}
	\end{split}	
	\right.
	\end{equation*}
	where $f(S_{\a_1,\a_2})$ is the objective value in \textbf{P2} given $S_{\a_1,\a_2}$.
	
	We then derive the stationary distribution $\text{Pr}^*$ for each state and examine the balanced equation as follows
	\begin{align}\label{balance_equation}
	\sum_{l=2}^{L}\text{Pr}^*&(S_{\phi_1,\phi_1})\times\text{Pr}(S_{\phi_1,\phi_l}|S_{\phi_1,\phi_1})\\
	&=\sum_{l=2}^{L}\text{Pr}^*(S_{\phi_1,\phi_l})\times\text{Pr}(S_{\phi_1,\phi_1}|S_{\phi_1,\phi_l})\nonumber
	\end{align}
	
	By substituting \eqref{balance_equation} with \eqref{trans_prob}, we have
	\begin{align}\label{balance_equation_2}
	&\sum_{l=2}^{L} \text{Pr}^*(S_{\phi_1,\phi_1}) \times \frac{e^{-f(S_{\phi_1,\phi_l})/\tau}}{2L(e^{-f(S_{\phi_1,\phi_1})/\tau}+e^{-f(S_{\phi_1,\phi_l})/\tau})}\\
	&=\sum_{l=2}^{L} \text{Pr}^*(S_{\phi_1,\phi_l}) \times \frac{e^{-f(S_{\phi_1,\phi_1})/\tau}}{2L(e^{-f(S_{\phi_1,\phi_1})/\tau}+e^{-f(S_{\phi_1,\phi_l})/\tau})}\nonumber
	\end{align}
	
	Observing the symmetry of equation \eqref{balance_equation_2}, we note that the set of equations in \eqref{balance_equation_2} are balanced if for arbitrary state $\tilde{S}$ in the strategy space $\Omega$, the stationary distribution is $\text{Pr}^*(\tilde{S})=\mathcal{K}e^{-f(\tilde{S})/\tau}$, where $\mathcal{K}$ is a constant. By applying the probability conservation law, we obtain the stationary distribution for the Markov chain as
	\begin{align}\label{stationary_dist}
	\text{Pr}^*(\tilde{S})=\frac{e^{-f(\tilde{S})/\tau}}{\sum_{S_i\in\Omega}e^{-f(\tilde{S_i})/\tau}}
	\end{align}
	for arbitrary state $\tilde{S}\in\Omega$. In addition, we observe that the Markov chain is irreducible and aperiodic. Therefore, the stationary distribution given in \eqref{stationary_dist} is valid and unique.
	
	Let $S^*$ be the optimal state which yields the minimum value in \textbf{P2}, i.e., $S^*=\arg\max_{S_i\in\Omega}f(S_i)$. From \eqref{stationary_dist}, we have $\lim_{\tau \to 0}\text{Pr}^*(S^*)=1$ which substantiates that the algorithm converges to the optimal state in probability. Finally, the analogous analysis can be straightforwardly extended to an N-dimensional Markov chain, thereby completing the proof.
\end{proof}

\subsection{Proof of Theorem \ref{OREO_performance_guarantee}} \label{proof_OREO_performance_guarantee}
\begin{proof}
To prove the performance guarantee, we first introduce the following Lemma.

\begin{lemma}\label{stationary_policy}
	For any $\delta>0$, there exists a stationary and randomized policy $\Pi$ for \textbf{P2}, which decides $\a^{\Pi,t}, \b^{\Pi,t}$ independent of the current queue backlogs $q(t)$, such that the following inequalities are satisfied: $\mathbb{E}\left[\hat{E}^t(\a^{\Pi,t},\b^{\Pi,t})-Q\right]\leq \delta$.
\end{lemma}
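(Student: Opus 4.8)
The plan is to establish this as the standard ``existence of an optimal stationary randomized policy'' result underpinning Lyapunov optimization. First I would make explicit the stochastic model behind \textbf{P1}: the exogenous per-slot randomness is the state $\omega^t = (\d^t, h^t, \tilde{E}^t)$, which I assume to be i.i.d.\ (or at least stationary ergodic) across time slots, consistent with the Poisson-demand model. For each realization $\omega$, the admissible pairs $(\a, \b)$ form the per-slot feasible set $\mathcal{X}(\omega)$ carved out by the storage, maximum-energy, and maximum-delay constraints \eqref{storage_capacity}, \eqref{max_energy}, \eqref{max_delay}. I would note that $\mathcal{X}(\omega)$ is nonempty (setting $\b = \bm{0}$, i.e.\ offloading everything to the cloud, is always admissible) and compact, so a minimizer over it exists.

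Next I would characterize the region of achievable long-term energy averages. A stationary randomized policy $\Pi$ is a measurable, possibly randomized map $\omega \mapsto (\a, \b) \in \mathcal{X}(\omega)$ that ignores the queue backlog $q(t)$. Because the states are i.i.d., the time-average energy produced by any such policy converges almost surely, by the law of large numbers, to the single-slot expectation $\mathbb{E}_\omega[\hat{E}(\a^\Pi(\omega), \b^\Pi(\omega))]$. As $\Pi$ ranges over all stationary randomized policies, these expectations sweep out a closed interval $[\hat{E}_{\min}, \hat{E}_{\max}]$, where $\hat{E}_{\min} = \mathbb{E}_\omega[\min_{(\a,\b)\in\mathcal{X}(\omega)} \hat{E}]$; randomizing within $\mathcal{X}(\omega)$ fills in the interior, and Caratheodory's theorem guarantees every point is realized by mixing over finitely many pure decisions.

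The crux is then to show that $Q$ lies at or above the bottom of this interval, i.e.\ $\hat{E}_{\min} \leq Q$. This I would derive from the assumed feasibility of \textbf{P1}: some (arbitrary, possibly clairvoyant) control sequence satisfies $\lim_T \frac{1}{T}\sum_t \hat{E}^t \leq Q$, and the standard equivalence result states that the set of time-average expectations attainable by arbitrary policies coincides with that attainable by stationary randomized policies for i.i.d.\ states. Hence $\hat{E}_{\min} \leq Q$, and for any $\delta > 0$ I can select the stationary randomized policy $\Pi$ attaining $\mathbb{E}[\hat{E}^t(\a^{\Pi,t}, \b^{\Pi,t})] \leq Q + \delta$, which is exactly $\mathbb{E}[\hat{E}^t - Q] \leq \delta$.

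The main obstacle I anticipate is rigorously justifying the identity between the achievable region of arbitrary policies and that of stationary randomized policies; both the Caratheodory/convex-hull reduction and the ergodic time-averaging rely on the i.i.d.\ assumption on $\omega^t$, so I would need to state that assumption cleanly and verify that the per-slot hard constraints \eqref{max_energy} and \eqref{max_delay} never render $\mathcal{X}(\omega)$ empty along the way. Once these structural points are in place, the remainder is routine bookkeeping.
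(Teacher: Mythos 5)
Your proposal is correct and takes essentially the same approach as the paper: the paper's entire proof is a one-line citation of Theorem 4.5 in Neely's stochastic network optimization monograph, which is exactly the standard existence result for stationary randomized ($\omega$-only) policies that you reconstruct — i.i.d.\ slot states, the achievable-region/Caratheodory argument, and feasibility of \textbf{P1} implying that $Q$ is attainable in expectation. The only difference is that you write out the details the citation leaves implicit, including the structural assumptions (i.i.d.\ randomness, nonempty per-slot feasible sets, feasibility of \textbf{P1}) that the paper never states but implicitly relies on.
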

\begin{proof}
	The proof can be obtained by Theorem 4.5 in \cite{neely2010stochastic}, which is omitted for brevity.
\end{proof}

Recall that the OREO seeks to choose strategies that minimizes \textbf{P2} among feasible decisions including the policy in Lemma \ref{stationary_policy} in each time slot. By plugging Lemma \ref{stationary_policy} into the \emph{drift-plus-cost} inequality \eqref{drift_plus_cost}, we obtain
\begin{align}
&\Delta(q(t))+V\mathbb{E}\left[\hat{D}^t(\a^{\Pi,t},\b^{\Pi,t}) \mid q(t) \right]\nonumber\\
&\leq B + q(t) \mathbb{E}\left[(\hat{E}^t(\a^{\Pi,t},\b^{\Pi,t})-Q) \mid q(t)\right]\nonumber\\
&~~~~~~~~~~~+V\mathbb{E}\left[\hat{D}^t(\a^{\Pi,t},\b^{\Pi,t}) \mid q(t) \right]\\
&\stackrel{(\ddag)}{\leq}B + \delta q(t)+V(\hat{D}^{opt}+\delta)\nonumber
\end{align}

The inequality $(\ddag)$ is because that the policy $\Pi$ is independent of the energy deficit queue. By letting $\delta$ go to zero, summing the inequality over $t\in\{0,1,\dots,T-1\}$ and then dividing the result by $T$, we have:
\begin{align}
&\dfrac{1}{T} \mathbb{E} \left[L(q(t))-L(q(0))\right] + \dfrac{V}{T} \sum_{t=0}^{T-1} \mathbb{E} \left[  \hat{D}^t(\a^{\Pi,t},\b^{\Pi,t})\right] \nonumber\\
&\leq B+V\hat{D}^{opt}
\end{align}

Rearranging the terms and considering the fact that $L(q(t))\geq 0$ and $L(q(0))=0$ yields the time average system delay bound.

To obtain the energy consumption bound, we make following assumption: there are values $\epsilon>0$ and $\Psi(\epsilon)$ and an policy $\a^{\Gamma,t}, \b^{\Gamma,t}$ that satisfies:
\begin{align*}
\mathbb{E}\left[\hat{D}(\a^{\Gamma,t}, \b^{\Gamma,t})\right]=\Psi(\epsilon), ~~
\mathbb{E}\left[\hat{E}^t(\a^{\Gamma,t},\b^{\Gamma,t})-Q \right]\leq -\epsilon
\end{align*}

Plugging above into inequality \eqref{drift_plus_cost}
\begin{equation*}
\Delta(q(t))+V\mathbb{E}\left[\hat{D}^t(\a^{\Gamma,t},\b^{\Gamma,t})\right]\leq B+V\Psi(\epsilon)-\epsilon q_(t)
\end{equation*}
Summing the above over $t\in\{0,1,\dots,T-1\}$ and rearranging terms as usual yields:
\begin{equation*}
\begin{split}
&\dfrac{1}{T} \sum_{t=0}^{T-1} \mathbb{E}[q(t)]\leq \dfrac{B+V(\Psi(\epsilon)-\dfrac{1}{T}\sum\limits_{t=0}^{T-1} \mathbb{E}\left[\hat{D}^t(\a^{\Gamma,t},\b^{\Gamma,t})\right]}{\epsilon}\\
&\leq \dfrac{B}{\epsilon}+\dfrac{V}{\epsilon}(\hat{D}^{\max}-\hat{D}^{opt})
\end{split}
\end{equation*}
Considering  $\sum\limits_{t=0}^{T-1}\mathbb{E}[q(t)] \geq \sum\limits_{t=0}^{T-1} \mathbb{E} \left[ \hat{E}(\a^{\Gamma,t}, \b^{\Gamma,t})-Q\right]$ yields the energy consumption bound.
\end{proof}


%

%
%
%
%

\ifCLASSOPTIONcaptionsoff
  \newpage
\fi



%
\bibliographystyle{IEEEtran}
\bibliography{refs}

\begin{thebibliography}{10}
\providecommand{\url}[1]{#1}
\csname url@samestyle\endcsname
\providecommand{\newblock}{\relax}
\providecommand{\bibinfo}[2]{#2}
\providecommand{\BIBentrySTDinterwordspacing}{\spaceskip=0pt\relax}
\providecommand{\BIBentryALTinterwordstretchfactor}{4}
\providecommand{\BIBentryALTinterwordspacing}{\spaceskip=\fontdimen2\font plus
\BIBentryALTinterwordstretchfactor\fontdimen3\font minus
  \fontdimen4\font\relax}
\providecommand{\BIBforeignlanguage}[2]{{%
\expandafter\ifx\csname l@#1\endcsname\relax
\typeout{** WARNING: IEEEtran.bst: No hyphenation pattern has been}%
\typeout{** loaded for the language `#1'. Using the pattern for}%
\typeout{** the default language instead.}%
\else
\language=\csname l@#1\endcsname
\fi
#2}}
\providecommand{\BIBdecl}{\relax}
\BIBdecl

\bibitem{mao2017mobile}
Y.~Mao, C.~You, J.~Zhang, K.~Huang, and K.~B. Letaief, ``Mobile edge computing:
  Survey and research outlook,'' \emph{arXiv preprint arXiv:1701.01090}, 2017.

\bibitem{shi2016edge}
W.~Shi, J.~Cao, Q.~Zhang, Y.~Li, and L.~Xu, ``Edge computing: Vision and
  challenges,'' \emph{IEEE Internet of Things Journal}, vol.~3, no.~5, pp.
  637--646, 2016.

\bibitem{mao2016dynamic}
Y.~Mao, J.~Zhang, and K.~B. Letaief, ``Dynamic computation offloading for
  mobile-edge computing with energy harvesting devices,'' \emph{IEEE Journal on
  Selected Areas in Communications}, vol.~34, no.~12, pp. 3590--3605, 2016.

\bibitem{ge20165g}
X.~Ge, S.~Tu, G.~Mao, C.-X. Wang, and T.~Han, ``5g ultra-dense cellular
  networks,'' \emph{IEEE Wireless Communications}, vol.~23, no.~1, pp. 72--79,
  2016.

\bibitem{fernando2013mobile}
N.~Fernando, S.~W. Loke, and W.~Rahayu, ``Mobile cloud computing: A survey,''
  \emph{Future generation computer systems}, vol.~29, no.~1, pp. 84--106, 2013.

\bibitem{buyya2009cloud}
R.~Buyya, C.~S. Yeo, S.~Venugopal, J.~Broberg, and I.~Brandic, ``Cloud
  computing and emerging it platforms: Vision, hype, and reality for delivering
  computing as the 5th utility,'' \emph{Future Generation computer systems},
  vol.~25, no.~6, pp. 599--616, 2009.

\bibitem{huang2012dynamic}
D.~Huang, P.~Wang, and D.~Niyato, ``A dynamic offloading algorithm for mobile
  computing,'' \emph{IEEE Transactions on Wireless Communications}, vol.~11,
  no.~6, pp. 1991--1995, 2012.

\bibitem{liu2016delay}
J.~Liu, Y.~Mao, J.~Zhang, and K.~B. Letaief, ``Delay-optimal computation task
  scheduling for mobile-edge computing systems,'' in \emph{Information Theory
  (ISIT), 2016 IEEE International Symposium on}.\hskip 1em plus 0.5em minus
  0.4em\relax IEEE, 2016, pp. 1451--1455.

\bibitem{xu2017online}
J.~Xu, L.~Chen, and S.~Ren, ``Online learning for offloading and autoscaling in
  energy harvesting mobile edge computing,'' \emph{IEEE Transactions on
  Cognitive Communications and Networking}, vol.~PP, no.~P, pp. 1--15, 2017.

\bibitem{sun2017emm}
Y.~Sun, S.~Zhou, and J.~Xu, ``Emm: Energy-aware mobility management for mobile
  edge computing in ultra dense networks,'' \emph{IEEE Journal on Selected
  Areas in Communications}, vol.~35, no.~11, pp. 2637--2646, 2017.

\bibitem{7997128}
L.~Chen, S.~Zhou, and J.~Xu, ``Energy efficient mobile edge computing in dense
  cellular networks,'' in \emph{2017 IEEE International Conference on
  Communications (ICC)}, May 2017, pp. 1--6.

\bibitem{chen2017socially}
L.~Chen and J.~Xu, ``Socially trusted collaborative edge computing in ultra
  dense networks,'' in \emph{Edge Computing (SEC), 2017 IEEE/ACM Symposium on},
  2017.

\bibitem{tanzil2016distributed}
S.~Tanzil, O.~Gharehshiran, and V.~Krishnamurthy, ``A distributed coalition
  game approach to femto-cloud formation,'' \emph{IEEE Transactions on Cloud
  Computing}, 2016.

\bibitem{chen2017computation}
C.~Lixing, J.~Xu, and S.~Zhou, ``Computation peer offloading in mobile edge
  computing with energy budgets,'' in \emph{Proceedings of the IEEE Global
  Communications Conference}.\hskip 1em plus 0.5em minus 0.4em\relax IEEE,
  2017.

\bibitem{tordsson2012cloud}
J.~Tordsson, R.~S. Montero, R.~Moreno-Vozmediano, and I.~M. Llorente, ``Cloud
  brokering mechanisms for optimized placement of virtual machines across
  multiple providers,'' \emph{Future Generation Computer Systems}, vol.~28,
  no.~2, pp. 358--367, 2012.

\bibitem{li2009enacloud}
B.~Li, J.~Li, J.~Huai, T.~Wo, Q.~Li, and L.~Zhong, ``Enacloud: An energy-saving
  application live placement approach for cloud computing environments,'' in
  \emph{Cloud Computing, 2009. CLOUD'09. IEEE International Conference
  on}.\hskip 1em plus 0.5em minus 0.4em\relax IEEE, 2009, pp. 17--24.

\bibitem{gao2013multi}
Y.~Gao, H.~Guan, Z.~Qi, Y.~Hou, and L.~Liu, ``A multi-objective ant colony
  system algorithm for virtual machine placement in cloud computing,''
  \emph{Journal of Computer and System Sciences}, vol.~79, no.~8, pp.
  1230--1242, 2013.

\bibitem{yang2016cost}
L.~Yang, J.~Cao, G.~Liang, and X.~Han, ``Cost aware service placement and load
  dispatching in mobile cloud systems,'' \emph{IEEE Transactions on Computers},
  vol.~65, no.~5, pp. 1440--1452, 2016.

\bibitem{wang2017dynamic}
S.~Wang, R.~Urgaonkar, T.~He, K.~Chan, M.~Zafer, and K.~K. Leung, ``Dynamic
  service placement for mobile micro-clouds with predicted future costs,''
  \emph{IEEE Transactions on Parallel and Distributed Systems}, vol.~28, no.~4,
  pp. 1002--1016, 2017.

\bibitem{zhang2013dynamic}
Q.~Zhang, Q.~Zhu, M.~F. Zhani, R.~Boutaba, and J.~L. Hellerstein, ``Dynamic
  service placement in geographically distributed clouds,'' \emph{IEEE Journal
  on Selected Areas in Communications}, vol.~31, no.~12, pp. 762--772, 2013.

\bibitem{shanmugam2013femtocaching}
K.~Shanmugam, N.~Golrezaei, A.~G. Dimakis, A.~F. Molisch, and G.~Caire,
  ``Femtocaching: Wireless content delivery through distributed caching
  helpers,'' \emph{IEEE Transactions on Information Theory}, vol.~59, no.~12,
  pp. 8402--8413, 2013.

\bibitem{prabh2005energy}
K.~S. Prabh and T.~F. Abdelzaher, ``Energy-conserving data cache placement in
  sensor networks,'' \emph{ACM Transactions on Sensor Networks (TOSN)}, vol.~1,
  no.~2, pp. 178--203, 2005.

\bibitem{wang2015dynamic}
T.~Wang, L.~Song, and Z.~Han, ``Dynamic femtocaching for mobile users,'' in
  \emph{Wireless Communications and Networking Conference (WCNC), 2015
  IEEE}.\hskip 1em plus 0.5em minus 0.4em\relax IEEE, 2015, pp. 861--865.

\bibitem{muller2017context}
S.~M{\"u}ller, O.~Atan, M.~van~der Schaar, and A.~Klein, ``Context-aware
  proactive content caching with service differentiation in wireless
  networks,'' \emph{IEEE Transactions on Wireless Communications}, vol.~16,
  no.~2, pp. 1024--1036, 2017.

\bibitem{liu2012renewable}
Z.~Liu, Y.~Chen, C.~Bash, A.~Wierman, D.~Gmach, Z.~Wang, M.~Marwah, and
  C.~Hyser, ``Renewable and cooling aware workload management for sustainable
  data centers,'' in \emph{ACM SIGMETRICS Performance Evaluation Review},
  vol.~40, no.~1.\hskip 1em plus 0.5em minus 0.4em\relax ACM, 2012, pp.
  175--186.

\bibitem{ren2013coca}
S.~Ren and Y.~He, ``Coca: Online distributed resource management for cost
  minimization and carbon neutrality in data centers,'' in \emph{Proceedings of
  the International Conference on High Performance Computing, Networking,
  Storage and Analysis}.\hskip 1em plus 0.5em minus 0.4em\relax ACM, 2013,
  p.~39.

\bibitem{kleinrock1976queueing}
L.~Kleinrock, \emph{Queueing systems, volume 2: Computer applications}.\hskip
  1em plus 0.5em minus 0.4em\relax wiley New York, 1976, vol.~66.

\bibitem{geoffrion1972generalized}
A.~M. Geoffrion, ``Generalized benders decomposition,'' \emph{Journal of
  optimization theory and applications}, vol.~10, no.~4, pp. 237--260, 1972.

\bibitem{christian1999monte}
P.~R. Christian and G.~Casella, ``Monte carlo statistical methods,'' 1999.

\bibitem{neely2010stochastic}
M.~J. Neely, ``Stochastic network optimization with application to
  communication and queueing systems,'' \emph{Synthesis Lectures on
  Communication Networks}, vol.~3, no.~1, pp. 1--211, 2010.

\end{thebibliography}

%

%
%
%




\end{document}